\let\csname equation*\endcsname\relax
\let\csname endequation*\endcsname\relax
\newcommand{\beq}{\begin{equation}} \newcommand{\eeq}{\end{equation}}
\newcommand{\bea}{\begin{eqnarray}} \newcommand{\eea}{\end{eqnarray}}
\newcommand{\bear}{\begin{eqnarray*}} \newcommand{\eear}{\end{eqnarray*}}
\newcommand{\lb}{\label} 
\newcommand{\rf}[1]{(\ref{#1})}   
\newtheorem{theorem}{Theorem}[section]
\newtheorem{lemma}[theorem]{Lemma}
\newtheorem{definition}{Definition}
\newtheorem{remark}{Remark}
\newenvironment{proof}[1][Proof]{\begin{trivlist}
\item[\hskip \labelsep {\bfseries #1}]}{\end{trivlist}}
\begin{document}

\title {The action principle for dissipative systems}

\author{Matheus J. Lazo}
\email{matheuslazo@furg.br}
\author{Cesar E. Krumreich}

\address{Instituto de Matem\'atica, Estat\'\i stica e F\'\i sica - FURG, Rio Grande, RS, Brazil.}

\begin{abstract}

In the present work we redefine and generalize the action principle for dissipative systems proposed by Riewe by fixing the mathematical inconsistencies present in the original approach. In order to formulate a quadratic Lagrangian for non-conservative systems, the Lagrangian functions proposed depend on mixed integer order and fractional order derivatives. As examples, we formulate a quadratic Lagrangian for a particle under a frictional force proportional to the velocity, and to the classical problem of an accelerated point charge.



\end{abstract}

\maketitle


\section{Introduction}

Since the introduction of the action principle in its mature formulation by Euler, Hamilton and Lagrange, it is well known that the equation of motion of dissipative linear dynamical systems with constant coefficients can not be obtained by variational principle. A rigorous proof of the failure of the action principle for non-conservative systems was given in 1931 by Bauer \cite{bauer} when he proved the impossibility to obtain a dissipation term proportional to the first order time derivative in the equation of motion from a variational principle. Over the last century, several methods were developed in order to deal with this failure. Examples include time dependent Lagrangians \cite{Stevens}, and the Bateman approach by introducing auxiliary coordinates that describe the reverse-time system \cite{Morse} (see \cite{VujaJones} for a review). Unfortunately, all these approaches give us non-physical Lagrangians in the sense they provide non-physical relations for the momentum and Hamiltonian of the system (see \cite{Riewe} for a detailed discussion). 

Only recently, by exploring a loophole in Bauer's proof (in his proof Bauer assumed that all derivatives were integer order), Riewe \cite{Riewe} showed that quadratic Lagrangians involving fractional time derivatives leads to equation of motion with non-conservative forces such as friction. Furthermore, the quadratic Lagrangian with fractional derivatives proposed by Riewe have the advantage over other approaches of being physical in the sense that it provides meaningful relations for momentum and Hamiltonian \cite{Riewe}. The Riewe's approach follows from the observation that if the Lagrangian contains a term proportional to $\left(d^n x/dt^n\right)^2$, then the Euler-Lagrange equation will have a corresponding term proportional to $d^{2n}x/dt^{2n}$. Hence a frictional force proportional to the velocity $dx/dt$ should follow directly from a Lagrangian containing a quadratic term proportional to the fractional derivative $\left(d^{\frac{1}{2}}x/dt^{\frac{1}{2}}\right)^2$. 

However, since fractional derivatives are in general non-local operators with algebraic properties different from usual derivatives \cite{OldhamSpanier,SATM,Kilbas,Hilfer,Magin,SKM,Diethelm}, in order to obtain the equation of motion for dissipative systems in the Riewe's approach, we should redefine the action principle by taking a given limit, make some not well defined approximations, and introduce a complex Lagrangian \cite{Riewe}. Despite the mathematical inconsistency in these approximations, the idea proposed by Riewe proves to be so novel and interesting that it transcends the problem that it was originally designed to solve and becomes an area of study in its own right. Nowadays, the fractional calculus of variations is being developed as a tool to study a wide variety of problems \cite{MalinowskaTorres}.

In the present work we redefine and generalize the action principle for dissipative systems by fixing the mathematical inconsistencies present in the Riewe's formulation. Our approach also has the advantage that the Lagrangian can be a real valued function with a direct physical meaningful. Furthermore we also generalize the Riewe's approach in order to formulate Lagrangians for higher order dissipative systems, like the classical problem of an accelerated point charge \cite{landau}.

Our paper is organized as follows. In section $2$ we review the Riemann-Liouville and Caputo fractional calculus. The action principle for dissipative systems, with Lagrangian functions with Caputo derivatives, is introduced in section $3$, and generalized to higher-order systems in section $4$. Finally, the conclusion is presented in section $5$.


\section{The Riemann-Liouville and Caputo Fractional Calculus}

The fractional calculus of derivative and integration of non-integers orders started more than three centuries ago with l'H\^opital and Leibniz when a derivative of order $\frac{1}{2}$ was suggested \cite{OldhamSpanier}. This subject was  also considered by several mathematicians as Euler, Laplace, Liouville, Grunwald, Letnikov, Riemann and others up to nowadays. Although the fractional calculus is almost as old as the usual integer order calculus, only in the last three decades it has gained more attention due to its applications in various fields of science (see \cite{SATM,Kilbas,Hilfer,Magin} for a review). Actually, there are several definitions of fractional order derivatives. Theses definitions include the Riemann-Liouville, Caputo, Riesz, Weyl,  Grunwald-Letnikov, etc. (see \cite{OldhamSpanier,SATM,Kilbas,Hilfer,Magin,SKM,Diethelm} for a review). In this section we review some definitions and properties of the Riemann-Liouville and Caputo fractional calculus.

Despite we have many different approaches to fractional calculus, several known formulations are somehow connected with the analytic continuation of Cauchy formula for $n$-fold integration
\beq
\lb{a2}
\begin{split}
\int_{a}^t x(\tilde{t})(d\tilde{t})^{n} &= \int_{a}^t\int_{a}^{t_{n}}\int_{a}^{t_{n-1}}\cdots \int_{a}^{t_3}\int_{a}^{t_2} x(t_1)dt_1dt_2\cdots dt_{n-1}dt_{n} \\
&= \frac{1}{\Gamma(n)}\int_{a}^t \frac{x(u)}{(t-u)^{1-n}}du \;\;\;\;\; (n\in \mathbb{N}),
\end{split}
\eeq
where $\Gamma$ is the Euler gamma function. The proof of Cauchy formula can be found in several textbooks (for example, it can be found in \cite{OldhamSpanier}). The analytic continuation of \rf{a2} gives us a definition for an integration of non-integer (or fractional) order. This fractional order integration is the building bloc of the Riemann-Liouville and Caputo calculus, the two most popular formulations of fractional calculus, as well as several other approaches to fractional calculus \cite{OldhamSpanier,SATM,Kilbas,Hilfer,Magin,SKM,Diethelm}. The fractional integration obtained from \rf{a2} are historically called Riemann-Liouville left and right fractional integrals: 
\begin{definition} Let $\alpha \in \mathbb{R}_+$. The operators ${_a J^{\alpha}_t}$ and ${_t J^{\alpha}_b}$ defined on $L_1[a,b]$ by
\beq
\lb{a3}
{_a J^{\alpha}_t} x(t) =\frac{1}{\Gamma(\alpha)}\int_{a}^t \frac{x(u)}{(t-u)^{1-\alpha}}du 
\eeq
and
\beq
\lb{a4}
{_t J^{\alpha}_b} x(t) =\frac{1}{\Gamma(\alpha)}\int_t^b \frac{x(u)}{(u-t)^{1-\alpha}}du ,
\eeq
with $a<b$ and $a,b\in \mathbb{R}$, are called left and the right fractional Riemann-Liouville integrals of order $\alpha$, respectively.
\end{definition}

For integer $\alpha$ the fractional Riemann-Liouville integrals \rf{a3} and \rf{a4} coincide with the usual integer order $n$-fold integration \rf{a2}. Moreover, from the definitions \rf{a3} and \rf{a4} it is easy to see that the Riemann-Liouville fractional integrals converge for any integrable function $x$ if $\alpha>1$. Furthermore, it is possible to proof the convergence of \rf{a3} and \rf{a4} for $x\in L_1[a,b]$ even when $0<\alpha<1$ \cite{Diethelm}.

It can be directly verified that the fractional Riemann-Liouville integrals \rf{a3} and \rf{a4} of the power functions $(t-a)^{\beta}$ and $(b-t)^{\beta}$ yield power functions of the same form \cite{Kilbas,Diethelm}:
\begin{remark}
Let $\alpha>0$ and $\beta>-1$ with $\alpha,\beta \in \mathbb{R}$. Then
\beq
\lb{a41a}
{_a J^{\alpha}_t} (t-a)^{\beta}=\frac{\Gamma(\beta+1)}{\Gamma(\beta+\alpha+1)} (t-a)^{\beta+\alpha},
\eeq
and
\beq
\lb{a41b}
{_t J^{\alpha}_b} (b-t)^{\beta}=\frac{\Gamma(\beta+1)}{\Gamma(\beta+\alpha+1)} (b-t)^{\beta+\alpha}. 
\eeq
\end{remark}

The integration operators ${_a J^{\alpha}_t}$ and ${_t J^{\alpha}_b}$ play a fundamental role in the definition of fractional Riemann-Liouville and Caputo calculus. In order to define the Riemann-Liouville derivatives, we recall that for positive integers $n>m$ it follows the identity $D^m_t x(t)=D^{n}_t {_aJ^{n-m}_t x(t)}$, where $D^m_t$ is an ordinary derivative of integer order $m$. 
\begin{definition}[Riemann-Liouville]
The left and the right Riemann-Liouville fractional derivative of order $\alpha >0$ ($\alpha\in \mathbb{R}$) are defined on $L_1[a,b]$, respectively, by ${_a D^{\alpha}_t} x(t) := D^{n}_t {_a J^{n-\alpha}_t} x(t)$ and ${_t D^{\alpha}_b} x(t):=(-1)^nD^{n}_t{_t J^{n-\alpha}_b} x(t)$ with $n=[\alpha]+1$, namely
\beq
\lb{a5}
{_a D^{\alpha}_t} x(t)=\frac{1}{\Gamma(n-\alpha)}\frac{d^n}{dt^n}\int_{a}^t \frac{x(u)}{(t-u)^{1+\alpha-n}}du 
\eeq
and
\beq
\lb{a6}
{_t D^{\alpha}_b} x(t)=\frac{(-1)^n}{\Gamma(n-\alpha)}\frac{d^n}{dt^n}\int_{t}^b \frac{x(u)}{(u-t)^{1+\alpha-n}}du,
\eeq
where $\frac{d^n}{dt^n}$ stands for ordinary derivatives of integer order $n$. 
\end{definition}
On the other hand, the Caputo fractional derivatives are defined by inverting the order between derivatives and integrations
\begin{definition}[Caputo]
The left and the right Caputo fractional derivatives of order $\alpha\in \mathbb{R}_+$ are defined on $C^n[a,b]$, respectively,
by ${_a^C D}^{\alpha}_t x(t) := {_aJ}^{n-\alpha}_t D^{n}_t x(t)$
and ${_t^C}D^{\alpha}_b x(t) := (-1)^n _tJ^{n-\alpha}_b
D^{n}_t x(t)$ with $n=[\alpha]+1$; that is,
\begin{equation}
\label{a7}
{_a^C D}^{\alpha}_t x(t) := \frac{1}{\Gamma(n-\alpha)}
\int_{a}^t \frac{x^{(n)}(u)}{(t-u)^{1+\alpha-n}}du
\end{equation}
and
\begin{equation}
\label{a8}
{_t^C D}^{\alpha}_b x(t)
:= \frac{(-1)^n}{\Gamma(n-\alpha)}\int_{t}^b
\frac{x^{(n)}(u)}{(u-t)^{1+\alpha-n}}du,
\end{equation}
where $a \le t \le b$ and $x^{(n)}(u)=\frac{d^n x(u)}{du^n}$
is the ordinary derivative of integer order $n$.
\end{definition}

An important consequence of definitions \eqref{a5}--\eqref{a8} is that the Riemann-Liouville and Caputo fractional derivatives are non-local operators. The left (right) differ-integration operator \eqref{a5} and \eqref{a7} (\eqref{a6} and \eqref{a8}) depends on the values of the function at left (right) of $t$, i.e. $a\leq u \leq t$ ($t\leq u \leq b$). On the other hand, it is important to note that when $\alpha$ is an integer, the Riemann-Liouville fractional derivatives \eqref{a5} and \eqref{a6} reduce to ordinary derivatives of order $\alpha$. On the other hand, in that case, the Caputo derivatives \eqref{a7} and \eqref{a8} differ from integer order ones by a polynomial of order $\alpha -1$ {\rm \cite{Kilbas,Diethelm}. Furthermore, it can be easily verified that the Riemann-Liouville and Caputo derivatives are connected with each other by the following relations:
\beq
\lb{a8a}
{_a^C D}^{\alpha}_t x(t)={_a D}^{\alpha}_t x(t)-\sum_{k=0}^{n-1}\frac{x^{(k)}(a)}{\Gamma(k-\alpha+1)}(t-a)^{k-\alpha},
\eeq
and
\beq
\lb{a8b}
{_t^C D}^{\alpha}_b x(t)={_t D}^{\alpha}_b x(t)-\sum_{k=0}^{n-1}\frac{x^{(k)}(b)}{\Gamma(k-\alpha+1)}(b-t)^{k-\alpha},
\eeq
where $n=[a]+1$.

For power functions $(t-a)^{\beta}$ and $(b-t)^{\beta}$ the Riemann-Liouville and Caputo fractional derivatives yield \cite{Kilbas,Diethelm}:
\begin{remark}
Let $\alpha>0$ and $\beta>-1$ with $\alpha,\beta \in \mathbb{R}$. We have for the Riemann-Liouville derivatives
\beq
\lb{a81}
{_a D}^{\alpha}_t (t-a)^{\beta}=\frac{\Gamma(\beta+1)}{\Gamma(\beta-\alpha+1)} (t-a)^{\beta-\alpha},
\eeq
and
\beq
\lb{a82}
{_t D}^{\alpha}_b (b-t)^{\beta}=\frac{\Gamma(\beta+1)}{\Gamma(\beta-\alpha+1)} (b-t)^{\beta-\alpha},
\eeq
if $\alpha-\beta \notin \mathbb{N}$, and zero when $\alpha-\beta \in \mathbb{N}$. For the Caputo derivatives we have, similarly,
\beq
\lb{a83}
{_a^C D}^{\alpha}_t (t-a)^{\beta}=\frac{\Gamma(\beta+1)}{\Gamma(\beta-\alpha+1)} (t-a)^{\beta-\alpha},
\eeq
and
\beq
\lb{a84}
{_t^C D}^{\alpha}_b (b-t)^{\beta}=\frac{\Gamma(\beta+1)}{\Gamma(\beta-\alpha+1)} (b-t)^{\beta-\alpha}. 
\eeq
when $\beta\neq 0, 1, 2,...,[\alpha]$, and zero if $\beta= 0,1, 2,...,[\alpha]$.

In particular, if $\beta=0$ the Riemann-Liouville fractional derivatives of a constant are, in general, not equal to zero:
\beq
\lb{a85}
{_a D}^{\alpha}_t 1=\frac{(t-a)^{-\alpha}}{\Gamma(1-\alpha)},\;\;\; \mbox{and}\;\;\; {_t D}^{\alpha}_b 1=\frac{(b-t)^{-\alpha}}{\Gamma(1-\alpha)},
\eeq
while the Caputo fractional derivatives of a constant is always zero.

\end{remark}

In addition to these definitions, in the present work we make use of the following property
\beq
\lb{a18}
{_a D_t^{\frac{1}{2}}} {_a^C D_t^{\frac{1}{2}}} x(t)=\frac{d}{dt}{_a J^{\frac{1}{2}}_t}{_a J^{\frac{1}{2}}_t}\frac{d}{dt}x(t)=\frac{d}{dt}{_a J^{1}_t}\frac{d}{dt}x(t)=\frac{d}{dt}x(t)
\eeq
that follows from the general property ${_a J^{\alpha}_t}{_a J^{\beta}_t}={_a J^{\alpha+\beta}_t}$ for any differentiable function $x(t)$ (see, e.g., \cite{SKM,Diethelm}), and the integration by parts
\begin{theorem}[Integration by parts --- see, e.g., \cite{Agrawal2}]
\label{thm:ml:03}
Let $0<\alpha<1$ and $x$ be a differentiable function in $[a,b]$ with $x(a)=x(b)=0$. For any function $y \in L_1([a,b])$ one has
\begin{equation}
\label{a15}
\int_{a}^{b} y(t) {_a^C D_t^{\alpha}} x(t)dt
= \int_a^b x(t) {_t D_b^{\alpha}} y(t)dt
\end{equation}
and
\begin{equation}
\label{a16}
\int_{a}^{b} y(t)  {_t^C D_b^\alpha} x(t)dt
=\int_a^b x(t) {_a D_t^\alpha} y(t) dt.
\end{equation}
\end{theorem}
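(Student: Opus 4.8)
The plan is to reduce both identities to the self-adjointness of the Riemann--Liouville fractional integral under the $L_1$ pairing, followed by an ordinary integration by parts. Since $0<\alpha<1$ we have $n=[\alpha]+1=1$ throughout, so \eqref{a7} gives ${_a^C D_t^{\alpha}}x(t)={_aJ_t^{1-\alpha}}x'(t)$ and \eqref{a8} gives ${_t^C D_b^{\alpha}}x(t)=-{_tJ_b^{1-\alpha}}x'(t)$; likewise \eqref{a5}--\eqref{a6} with $n=1$ give ${_aD_t^{\alpha}}y(t)=\frac{d}{dt}{_aJ_t^{1-\alpha}}y(t)$ and ${_tD_b^{\alpha}}y(t)=-\frac{d}{dt}{_tJ_b^{1-\alpha}}y(t)$.

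First I would establish the auxiliary Dirichlet-type identity
\begin{equation*}
\int_a^b y(t)\,{_aJ_t^{\beta}}g(t)\,dt=\int_a^b g(t)\,{_tJ_b^{\beta}}y(t)\,dt ,\qquad 0<\beta<1 ,
\end{equation*}
together with its mirror image obtained by exchanging ${_aJ_t^{\beta}}$ and ${_tJ_b^{\beta}}$. Using \eqref{a3}, the left side is $\frac{1}{\Gamma(\beta)}\int_a^b\int_a^t (t-u)^{\beta-1}y(t)g(u)\,du\,dt$, an integral over the triangle $\{a\le u\le t\le b\}$; interchanging the order of integration rewrites it as $\frac{1}{\Gamma(\beta)}\int_a^b g(u)\left(\int_u^b (t-u)^{\beta-1}y(t)\,dt\right)du$, and the inner integral is $\Gamma(\beta)\,{_uJ_b^{\beta}}y(u)$ by \eqref{a4}, which is the claim.

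Next, applying this with $g=x'$ and $\beta=1-\alpha$ yields
\begin{equation*}
\int_a^b y(t)\,{_a^C D_t^{\alpha}}x(t)\,dt=\int_a^b x'(t)\,{_tJ_b^{1-\alpha}}y(t)\,dt .
\end{equation*}
An ordinary integration by parts turns the right side into $\bigl[x(t)\,{_tJ_b^{1-\alpha}}y(t)\bigr]_a^b-\int_a^b x(t)\,\frac{d}{dt}{_tJ_b^{1-\alpha}}y(t)\,dt$; the boundary term vanishes because $x(a)=x(b)=0$, and $-\frac{d}{dt}{_tJ_b^{1-\alpha}}y={_tD_b^{\alpha}}y$, so the expression equals $\int_a^b x(t)\,{_tD_b^{\alpha}}y(t)\,dt$, which is \eqref{a15}. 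Identity \eqref{a16} follows by the symmetric computation: write ${_t^C D_b^{\alpha}}x=-{_tJ_b^{1-\alpha}}x'$, use the mirror Dirichlet identity to move the fractional integral onto $y$, integrate by parts (again the boundary term drops by $x(a)=x(b)=0$), and recognise $\frac{d}{dt}{_aJ_t^{1-\alpha}}y={_aD_t^{\alpha}}y$.

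I expect the only real difficulty to be analytic bookkeeping rather than anything conceptual. The Fubini interchange is legitimate because for $0<\alpha<1$ the kernel $(t-u)^{-\alpha}$ is locally integrable and $y\in L_1$ while $x$ is differentiable on the compact interval $[a,b]$ (so $x'$ may be taken bounded, or at least integrable), making the double integral absolutely convergent. The genuinely delicate point is the ordinary integration by parts: it presupposes that ${_tJ_b^{1-\alpha}}y$ is absolutely continuous, so that $\frac{d}{dt}{_tJ_b^{1-\alpha}}y$ exists almost everywhere and the fundamental theorem of calculus applies. For arbitrary $y\in L_1$ this is not automatic; one either imposes a mild additional smoothness hypothesis on $y$, or reads the right-hand sides of \eqref{a15}--\eqref{a16} distributionally, after which the manipulations above are rigorous.
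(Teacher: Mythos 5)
Your proposal is correct. The paper itself does not prove this theorem --- it is quoted from the literature (Agrawal, J.\ Phys.\ A \textbf{40}, 6287) --- and your argument is precisely the standard one used there: reduce the Caputo derivative to ${_aJ_t^{1-\alpha}}x'$ (resp.\ $-{_tJ_b^{1-\alpha}}x'$) since $n=1$, transpose the fractional integral via the Dirichlet/Fubini interchange over the triangle $\{a\le u\le t\le b\}$, and finish with an ordinary integration by parts whose boundary terms die by $x(a)=x(b)=0$. Your closing caveat is also well taken: for bare $y\in L_1$ the function ${_tJ_b^{1-\alpha}}y$ need not be absolutely continuous, so ${_tD_b^{\alpha}}y$ on the right-hand side of \eqref{a15} need not exist classically; the identity as stated implicitly assumes $y$ regular enough for that derivative to make sense (which holds in the paper's applications, where $y$ is a partial derivative of a smooth Lagrangian along a $C^2$ path), or must be read distributionally as you suggest.
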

It is important to notice that the formulas of integration by parts \eqref{a15} and \eqref{a16} relate Caputo left (right) derivatives to Riemann-Liouville right (left) derivatives.


\section{The Action Principle for dissipative systems}

In this section we propose an action principle, or Hamilton's principle, for Lagrangian functions depending on fractional derivatives in order to fix the mathematical inconsistencies present in Riewe's approach \cite{Riewe}. 

The classical action principle states that a system moves from a given configuration to another, on the time interval $[a,b]$, in such a way that the variation of the action integral $S=\int_a^b Ldt$ between the path taken and a neighboring virtual path is zero, where $L$ is the Lagrangian function. In the language of calculus of variations, the action principle mean that the first variation of the functional $S$ should be zero, namely
\beq
\lb{b1}
\delta S=\delta \int_a^b Ldt=0,
\eeq
and the system path is given by the solution of the Euler-Lagrange equation obtained from \rf{b1}. 

For any positive integer $n$ it is a simple exercise to show that if the Lagrangian contains a term proportional to $\left(d^n x/dt^n\right)^2$ then the Euler-Lagrange equation will have a corresponding term proportional to $d^{2n}x/dt^{2n}$. As a consequence, it is not possible to formulate quadratic Lagrangian function for dissipative systems containing odd order time derivatives in the equation of motion, like a frictional force proportional to the velocity $\dot{x}=dx/dt$. In order to bypass this difficulty, Riewe proposed a Lagrangian for a particle under this kind of frictional force by introducing a quadratic term containing a Riemann-Liouville derivative of order $\alpha=1/2$ \cite{Riewe}:
\beq
\lb{b4}
L\left(x,\dot{x},{_t D^{\frac{1}{2}}_b} x\right)=\frac{1}{2}m\left(\dot{x}\right)^2-U(x)+i\frac{\gamma}{2}\left({_t D^{\frac{1}{2}}_b} x\right)^2,
\eeq
where the three terms in \rf{b4} represent the kinetic energy, potential energy, and the fractional linear friction energy, respectively. The action Riewe consider is then given by
\beq
\lb{b2}
S=\int_a^b {L}\left(x,\dot{x},{_t D^{\alpha}_b} x\right) dt,
\eeq
for which the path taken $x(t)$ should satisfy the fractional Euler-Lagrange equation \cite{Riewe}
\beq
\lb{b3}
\frac{\partial {L}}{\partial x}-\frac{d}{dt}\frac{\partial {L}}{\partial \left(\dot{x}\right)}+{_a D^{\alpha}_t}\frac{\partial {L}}{\partial\left({_t D^{\alpha}_b} x\right)}=0.
\eeq
It is important to notice that while the Lagrangian function in \rf{b2} contains only right Riemann-Liouville derivative, the Euler-Lagrange equation \rf{b3} contain both right and left derivatives. The Euler-Lagrange equation \rf{b3} give us for \rf{b4} the following fractional differential equation
\beq
\lb{b5}
m \ddot{x}-i\gamma{_a D^{\frac{1}{2}}_t} {_t D^{\frac{1}{2}}_b}x=F(x),
\eeq
where $\ddot{x}=d^2x/dt^2$ is the acceleration and $F(x)=-\frac{d}{dx}U(x)$ is the external force. Since in general ${_a D^{\frac{1}{2}}_t} {_t D^{\frac{1}{2}}_b}x\neq \dot{x}$, Riewe proposed a modification to the action principle by stating that the equation of motion is obtained after taking the limit $a\rightarrow b$ in the Euler-Lagrange equation \rf{b3}. In addition Riewe also made the approximation $i {_a D_t^{\frac{1}{2}}} f(t) \approx {_t D_b^{\frac{1}{2}}}f(t)$, and used the relation ${_t D_b^{\frac{1}{2}}}{_t D_b^{\frac{1}{2}}}f(t)=\frac{d}{dt}f(t)$, obtaining \cite{Riewe}
\beq
\lb{b6}
\lim_{a\rightarrow b}\left(m \ddot{x}-i\gamma{_a D^{\frac{1}{2}}_t} {_t D^{\frac{1}{2}}_b}x\right)=\lim_{a\rightarrow b}\left(m \ddot{x}+\gamma{_t D^{\frac{1}{2}}_b} {_t D^{\frac{1}{2}}_b}x\right)=m \ddot{x}+\gamma \dot{x}=F(x).
\eeq

It is important to emphasize that the condition $a\rightarrow b$ applied to the action principle does not imply any restrictions for conservative systems, since in this case the path $x(t)$ is the action's extremal for any time interval $[a,b]$, even when $a\rightarrow b$. However, the Riewe's approach displays two mathematical problems. First, if $x(b)\neq 0$ we have ${_t D_b^{\frac{1}{2}}}{_t D_b^{\frac{1}{2}}}x$ defined only in $[a,b)$ and not in $[a,b]$ and, beyond that, ${_t D_b^{\frac{1}{2}}}{_t D_b^{\frac{1}{2}}}x=\dot{x}$ is not always valid \cite{OldhamSpanier,SATM,Kilbas,Hilfer,Magin,SKM,Diethelm}. Second and most important, the approximation $i {_a D_t^{\frac{1}{2}}} x(t) \approx {_t D_b^{\frac{1}{2}}}x(t)$ is actually inconsistent. In addition to both left and right fractional derivatives of $x(t)$ being real valued functions, the ratio ${_a D_t^{\frac{1}{2}}} x(t) / {_t D_b^{\frac{1}{2}}}x(t)$ in the limit $a\rightarrow b$ is indefinite (see Appendix A). Finally, the fractional linear friction energy defined in \rf{b4} has the additional drawback of not display physical reality due to the imaginary number in \rf{b4} and, most important, since it diverges in the limit $a\rightarrow b$ if $x(b)\neq 0$ (see Appendix A).

In order to fix these inconsistencies we propose two modifications to the action principle for dissipative systems. Furthermore, our approach has the advantage of allowing a real valued Lagrangian with direct physical interpretation. The first mathematical limitation, as well as the divergence of the linear friction energy, can be easily removed if we replace the Riemann-Liouville derivative in the Lagrangian by a Caputo derivative (see discussion at the end of this section). Furthermore, we show that the second inconsistency can be fixed by specifying the way we takes the limit $a\rightarrow b$. To proof the last statement, let we analyze the ratio ${_a^C D_t^{\frac{1}{2}}} x(t) / {_t^C D_b^{\frac{1}{2}}}x(t)$ in the limit $a\rightarrow b$. If we consider $x(t)$ at least $C^1[a,b]$, there are real numbers $a<c_a<t$ and $t<c_b<b$ such that $x(t)=x(a)+\dot{x}(c_a)(t-a)$ in $[a,t]$ and $x(t)=x(b)-\dot{x}(c_b)(b-t)$ in $[t,b]$. Then, if $\dot{x}(b)\neq 0$,
\beq
\lb{ta4b}
\lim_{a\rightarrow b^-} \frac{{_a^C D_t^{\alpha}} x(t)}{{_t^C D_b^{\alpha}}x(t)}=-\lim_{a\rightarrow b^-} \frac{\dot{x}(c_a)}{\dot{x}(c_b)}\left(\frac{t-a}{b-t}\right)^{1-\alpha}=-\lim_{a\rightarrow b^-} \left(\frac{t-a}{b-t}\right)^{1-\alpha},
\eeq
since $\lim_{a\rightarrow b^-}\dot{x}(c_a)/\dot{x}(c_b)=1$. Note that the limit \rf{ta4b} is indefinite for arbitrary $t\in [a,b]$. For example, by choosing $t=a+s(b-a)$ with $0<s<1$ we obtain 
\beq
\lb{ta6}
\lim_{a\rightarrow b^-} \frac{{_a^C D_t^{\alpha}} x(t)}{{_t^C D_b^{\alpha}}x(t)}=-
\lim_{a\rightarrow b^-} \left(\frac{s}{1-s}\right)^{1-\alpha}=-\left(\frac{s}{1-s}\right)^{1-\alpha},
\eeq
that depends on the value $0<s<1$. In conclusion, in order to get a definite limit for the ratio ${_a^C D_t^{\frac{1}{2}}} x(t) / {_t^C D_b^{\frac{1}{2}}}x(t)$ we should specify the way $t$ tends to $b$ in the limit $a\rightarrow b$. We can now state the following Lemma:
\begin{lemma} Let $x(t)\in C^1[a,b]$ with $\dot{x}(b)\neq 0$, and let $\alpha\in \mathbb{R}$ with $0<\alpha<1$. Then we have, for a given real number $0<s<1$,
\beq
\lb{ta1}
\lim_{a\rightarrow b^-} \frac{{_a^C D_t^{\alpha}} x(t)}{{_t^C D_b^{\alpha}}x(t)}=-\left(\frac{s}{1-s}\right)^{1-\alpha} \;\;\; \mbox{if} \;\;\; t=a+s(b-a).
\eeq
\end{lemma}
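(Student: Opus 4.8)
The plan is to collapse each Caputo derivative appearing in \rf{ta1} to an explicit power function by means of the mean value theorem for integrals, and then to read off the limit. Since $0<\alpha<1$ we have $n=[\alpha]+1=1$, so \rf{a7} and \rf{a8} read
\[
{_a^C D_t^{\alpha}} x(t)=\frac{1}{\Gamma(1-\alpha)}\int_a^t\frac{\dot x(u)}{(t-u)^{\alpha}}\,du,
\qquad
{_t^C D_b^{\alpha}} x(t)=-\frac{1}{\Gamma(1-\alpha)}\int_t^b\frac{\dot x(u)}{(u-t)^{\alpha}}\,du.
\]
Because $x\in C^1[a,b]$ the factor $\dot x$ is continuous, while the kernels $(t-u)^{-\alpha}$ and $(u-t)^{-\alpha}$ are positive and integrable on $(a,t)$ and $(t,b)$ respectively (here $0<\alpha<1$ is essential). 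Hence the weighted mean value theorem for integrals produces points $\xi_a\in(a,t)$ and $\xi_b\in(t,b)$ with
\[
{_a^C D_t^{\alpha}} x(t)=\frac{\dot x(\xi_a)}{\Gamma(1-\alpha)}\int_a^t\frac{du}{(t-u)^{\alpha}}=\frac{\dot x(\xi_a)}{\Gamma(2-\alpha)}(t-a)^{1-\alpha},
\]
\[
{_t^C D_b^{\alpha}} x(t)=-\frac{\dot x(\xi_b)}{\Gamma(1-\alpha)}\int_t^b\frac{du}{(u-t)^{\alpha}}=-\frac{\dot x(\xi_b)}{\Gamma(2-\alpha)}(b-t)^{1-\alpha},
\]
using $\Gamma(2-\alpha)=(1-\alpha)\Gamma(1-\alpha)$. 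Equivalently one could first write $x$ affinely on $[a,t]$ and on $[t,b]$ via the ordinary mean value theorem, as in \rf{ta4b}, and then invoke the power rules \rf{a83} and \rf{a84}; both routes yield the same two expressions.

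Dividing the two expressions, the $\Gamma(2-\alpha)$ factors cancel and one obtains
\[
\frac{{_a^C D_t^{\alpha}} x(t)}{{_t^C D_b^{\alpha}} x(t)}=-\frac{\dot x(\xi_a)}{\dot x(\xi_b)}\left(\frac{t-a}{b-t}\right)^{1-\alpha},
\]
which is precisely \rf{ta4b}, valid whenever the denominator is nonzero. Inserting the prescribed $t=a+s(b-a)$ gives $t-a=s(b-a)$ and $b-t=(1-s)(b-a)$, so that $(t-a)/(b-t)=s/(1-s)$ is a constant, independent of $a$.

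It then remains to pass to the limit $a\to b^-$. With $s\in(0,1)$ fixed, $t=a+s(b-a)\to b$; since $a<\xi_a<t$ and $t<\xi_b<b$, both $\xi_a$ and $\xi_b$ are squeezed to $b$. Continuity of $\dot x$ together with the hypothesis $\dot x(b)\neq 0$ then gives $\dot x(\xi_a)\to\dot x(b)\neq 0$ and $\dot x(\xi_b)\to\dot x(b)\neq 0$, hence $\dot x(\xi_a)/\dot x(\xi_b)\to 1$; the same fact shows ${_t^C D_b^{\alpha}} x(t)=-\dot x(\xi_b)(b-t)^{1-\alpha}/\Gamma(2-\alpha)\neq 0$ for $a$ sufficiently close to $b$, so the quotient in \rf{ta1} is well defined along the limit. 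Since $(s/(1-s))^{1-\alpha}$ does not move, the product converges to $-(s/(1-s))^{1-\alpha}$, which is \rf{ta1}.

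I do not anticipate a genuine obstacle here. The only steps requiring care are verifying the hypotheses of the weighted mean value theorem for integrals --- that the singular kernels are integrable and of constant sign on their intervals, which holds exactly because $0<\alpha<1$ --- and keeping track of where $\dot x(b)\neq 0$ is used: it is needed both to force the limiting ratio $\dot x(\xi_a)/\dot x(\xi_b)$ to equal $1$ (otherwise one meets a $0/0$ indeterminacy) and to keep ${_t^C D_b^{\alpha}} x(t)$ away from zero near the limit, so that the quotient in \rf{ta1} makes sense at all.
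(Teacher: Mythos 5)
Your proof is correct and follows essentially the same route as the paper: reduce each Caputo derivative to $\dot{x}$ evaluated at an intermediate point times $(t-a)^{1-\alpha}/\Gamma(2-\alpha)$ (respectively $-(b-t)^{1-\alpha}/\Gamma(2-\alpha)$), take the ratio to get \rf{ta4b}, and let $a\to b^-$ with $t=a+s(b-a)$ fixed so that $(t-a)/(b-t)=s/(1-s)$. Your use of the weighted mean value theorem for integrals is in fact a slightly cleaner justification than the paper's informal claim that $x$ is exactly affine on $[a,t]$ and $[t,b]$ with a single intermediate point, but the substance of the argument is identical.
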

In particular, from this Lemma we can conclude that for $s=\frac{1}{2}$ (when $t$ is the midpoint of $[a,b]$) we can approximate
\beq
\lb{b7}
{_t^C D_b^{\alpha}}x(t)\approx -{_a^C D_t^{\alpha}}x(t) \;\;\;\;\; \mbox{for} \;\;\;\;\; a-b<<1.
\eeq

We can now formulate an action principle for dissipative systems free from the problems found in Riewe's approach. The action principle we propose states that the equation of motion for dissipative systems is obtained by taking the limit $a\rightarrow b$ with $t=a+(b-a)/2=(a+b)/2$ in the extremal of the action
\beq
\lb{b8}
S=\int_a^b {L}\left(x,\dot{x},{_t^C D^{\alpha}_b} x\right) dt,
\eeq
that satisfy the fractional Euler-Lagrange equation (see \cite{ELCaputo} and references therein):
\beq
\lb{b9}
\frac{\partial {L}}{\partial x}-\frac{d}{dt}\frac{\partial {L}}{\partial \left(\dot{x}\right)}+{_a D^{\alpha}_t}\frac{\partial {L}}{\partial\left({_t^C D^{\alpha}_b} x\right)}=0.
\eeq

Note that in the definition of the action principle we choose $t=(a+b)/2$ ($s=1/2$) as the midpoint of the time interval $[a,b]$. Actually, this choice do not imply any restriction since for $s\neq 1/2$ we can redefine the Lagrangian by introducing a constant in order to absorb the factor $(s/(1-s))^{1-\alpha}$ that should appear in the approximation \rf{b7} due to \rf{ta1}. Furthermore, the condition $\dot{x}(b)\neq 0$, in both \rf{ta1} and \rf{b7}, also do not impose restriction to our problem since the physical path $x(t)$ is at least a $C^2$ function, and we can take the analytic continuation of our solution to the points where $\dot{x}(b)= 0$. Finally, in order to display the absence of mathematical inconsistencies in our approach, and in order to show that our method provides us with physical Lagrangians, let us consider the simple problem of a particle under a frictional force proportional to velocity. 

\subsection{A quadratic Lagrangian for the linear friction problem}

The action principle we propose enables us to formulate a quadratic Lagrangian for a particle under a frictional force proportional to the velocity as
\beq
\lb{b10}
L\left(x,\dot{x},{_t^C D^{\frac{1}{2}}_b} x\right)=\frac{1}{2}m\left(\dot{x}\right)^2-U(x)+\frac{\gamma}{2}\left({_t^C D^{\frac{1}{2}}_b} x\right)^2,
\eeq
where the three terms in \rf{b10} represent the kinetic energy, potential energy, and the fractional linear friction energy, respectively. Note that different from the Riewe's Lagrangian \rf{b4} our Lagrangian \rf{b10} is a real function with a linear friction energy physically meaningful. Since the equation of motion is obtained in the limit $a\rightarrow b$, if we consider the last term in \rf{b10} up to first order in $\Delta t=b-a$ we get:
\beq
\lb{b11}
\frac{\gamma}{2}\left({_t^C D^{\frac{1}{2}}_b} x\right)^2\approx \frac{\gamma}{2}\left(\frac{\Gamma(1)}{\Gamma(\frac{3}{2})}\right)^2\left(\dot{x}\right)^2\Delta t \approx \frac{2}{\pi} \gamma \dot{x} \Delta x,
\eeq
that coincide, apart from the multiplicative constant $2/\pi$, with the work from the frictional force $\gamma \dot{x}$ in the displacement $\Delta x \approx \dot{x}\Delta t$. This additional constant is a consequence of the use of fractional derivatives in the Lagrangian and do not appears in the equation of motion after we apply the action principle. Furthermore, the Lagrangian \rf{b10} is physical in the sense it provide us with physically meaningful relations for the momentum and the Hamiltonian. If we define the canonical variables
\beq
\lb{b12}
q_{1}=\dot{x}, \;\;\; q_{\frac{1}{2}}={_t^C D^{\frac{1}{2}}_b} x,
\eeq
and
\beq
\lb{b13}
p_1=\frac{\partial L}{\partial q_1}=m\dot{x}, \;\;\; p_{\frac{1}{2}}=\frac{\partial L}{\partial q_{\frac{1}{2}}}=\gamma{_t^C D^{\frac{1}{2}}_b} x,
\eeq
we obtain the Hamiltonian
\beq
\lb{b14}
H=q_1p_1+q_{\frac{1}{2}}p_{\frac{1}{2}}-L=\frac{1}{2}m\left(\dot{x}\right)^2+U(x)+\frac{\gamma}{2}\left({_t^C D^{\frac{1}{2}}_b} x\right)^2.
\eeq
From \rf{b13} and \rf{b14} we can see that the Lagrangian \rf{b10} is physical in the sense it provides us a correct relation for the momentum $p_1=m\dot{x}$, and a physically meaningful Hamiltonian (it is the sum of all energies). Furthermore, the additional fractional momentum $p_{\frac{1}{2}}=\gamma{_t^C D^{\frac{1}{2}}_b} x$ goes to zero when we takes the limit $a\rightarrow b$.

Finally, the equation of motion for the particle is obtained by inserting our Lagrangian \rf{b10} into the Euler-Lagrange equation \rf{b9},
\beq
\lb{b15}
m \ddot{x}-\gamma{_a D^{\frac{1}{2}}_t} {_t^C D^{\frac{1}{2}}_b}x=F(x),
\eeq
where $F(x)=-\frac{d}{dx}U(x)$ is the external force. By taking the limit $a\rightarrow b$ with $t=(a+b)/2$ and using the approximation \rf{b7} and the relation \rf{a18} we obtain
\beq
\lb{b16}
m \ddot{x}+\gamma \dot{x}=F(x).
\eeq


\section{The action principle for higher-order dissipative systems}

In this section we generalize our action principle in order to investigate new kinds of quadratic Lagrangian suitable to study higher-order systems. Lagrangians with higher-order derivatives have gained attention in the last three decades due to applications in several fields. Higher-order derivatives appear naturally as corrections to renormalizable gauge field theories \cite{FaddeevSlavnov}, gravity \cite{rgravity}, cosmic strings \cite{BirellDavies},  dark energy physics \cite{dark}, electrodynamics \cite{electro}, and other problems \cite{other}. Furthermore, higher-order effective Lagrangians have been intensely studied in the literature in order to parametrize possible deviations of the electroweak interactions from the standard model \cite{Higgs}. Therefore, the study of Lagrangians with higher-order derivatives are important for several physical problems, at least from the phenomenological point of view. Actually, we need not go so far to find physical systems with higher-order derivatives. As an example, we have the classical problem of an accelerated point charge where we have a radiation recoil force proportional to $\dddot{x}=d^3x/dt^3$. 

In order to deal with Lagrangian functions with fractional derivatives and higher-order integer derivatives we introduced the following Theorem
\begin{theorem}
Let $\alpha_{j}\in \mathbb{R}$ with $0<\alpha_{j}<1$ ($j=0,1,...,n$), and $S$ be an action of the form
\begin{equation}
\label{t1}
\begin{split}
S=\int_a^b {L} &\left(t,x,\frac{dx}{dt},...,\frac{d^nx}{dt^n},{_a^C D^{\alpha_0}_t} x,{_a^C D^{\alpha_1}_t} \frac{dx}{dt},...,{_a^C D^{\alpha_n}_t} \frac{d^nx}{dt^n},\right. \\
&\quad \quad \quad \quad \quad \quad \quad \quad \quad \quad \left. {_t^C D^{\alpha_0}_b} x,{_t^C D^{\alpha_1}_b} \frac{dx}{dt},...,{_t^C D^{\alpha_n}_b} \frac{d^nx}{dt^n}\right)dt,
\end{split}
\end{equation}
where the function $x \in C^{n+2}[a,b]$ satisfies the fixed boundary conditions $x(a)=x^{(0)}_a$, $x(b)=x^{(0)}_b$, and $\frac{d^{j}x(a)}{dt^{j}}=x^{(j)}_a$, $\frac{d^{j}x(b)}{dt^{j}}=x^{(j)}_b$ with $x^{(j)}_a,x^{(j)}_b \in \mathbb{R}$ for $j=1,...,n-1$. Also let ${L}\in C^{2}[a,b]\times \mathbb{R}^{3n+3}$. Then, the necessary condition for $S$ to possess an extremum is that the function $x$ fulfills the following fractional Euler-Lagrange equation:
\begin{equation}
\label{t2}
\begin{split}
\frac{\partial {L}}{\partial x}&+\sum_{j=1}^n(-1)^j\frac{d^j}{dt^j}\frac{\partial {L}}{\partial \left(\frac{d^jx}{dt^t}\right)}\\
&+\sum_{j=0}^{n}(-1)^j\frac{d^j}{dt^j} \left( {_t D^{\alpha_j}_b}\frac{\partial {L}}{\partial\left({_a^C D^{\alpha_j}_t} \frac{d^jx}{dt^j}\right)}+{_a D^{\alpha_j}_t}\frac{\partial {L}}{\partial\left({_t^C D^{\alpha_j}_b} \frac{d^jx}{dt^j}\right)}\right)=0,
\end{split}
\end{equation}
where $\frac{d^0}{dt^0}\equiv 1$.
\end{theorem}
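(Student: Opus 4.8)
The plan is to run the standard derivation of the Euler--Lagrange equation adapted to the present mixed integer/Caputo setting: perturb the path, compute the first variation, strip every derivative --- integer and fractional --- off the perturbation using repeated ordinary integration by parts together with the fractional integration by parts of Theorem~\ref{thm:ml:03}, and finish with the fundamental lemma of the calculus of variations. Write $x^{(j)}=d^jx/dt^j$ throughout. First I would fix a variation $\eta\in C^{n+2}[a,b]$ with $\eta^{(k)}(a)=\eta^{(k)}(b)=0$ for $k=0,\dots,n$ (it already suffices to take $\eta\in C_c^\infty(a,b)$), so that $x_\varepsilon:=x+\varepsilon\eta$ obeys the prescribed boundary data for every $\varepsilon$. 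Put $\phi(\varepsilon):=S(x_\varepsilon)$. Using linearity of $d^j/dt^j$ and of the left and right Caputo operators --- each of which is, by \rf{a7}, a fractional integral of an ordinary derivative --- and differentiating under the integral sign, which is licit because $L\in C^2$, $[a,b]$ is compact and the Caputo derivatives depend continuously on their arguments, the extremality condition $\phi'(0)=0$ becomes
\[
\int_a^b \frac{\partial L}{\partial x}\,\eta\,dt+\sum_{j=1}^n \int_a^b \frac{\partial L}{\partial x^{(j)}}\,\eta^{(j)}\,dt+\sum_{j=0}^n \int_a^b \frac{\partial L}{\partial\left({_a^C D^{\alpha_j}_t}x^{(j)}\right)}\,{_a^C D^{\alpha_j}_t}\eta^{(j)}\,dt+\sum_{j=0}^n \int_a^b \frac{\partial L}{\partial\left({_t^C D^{\alpha_j}_b}x^{(j)}\right)}\,{_t^C D^{\alpha_j}_b}\eta^{(j)}\,dt=0,
\]
with all partial derivatives of $L$ evaluated along $x$.

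Next I would move the derivatives off $\eta$, one family of terms at a time. For the integer terms $\int_a^b(\partial L/\partial x^{(j)})\,\eta^{(j)}\,dt$ I integrate by parts $j$ times; the boundary terms involve $\eta,\dots,\eta^{(j-1)}$ and hence vanish, leaving $(-1)^j\int_a^b\eta\,\frac{d^j}{dt^j}(\partial L/\partial x^{(j)})\,dt$. For the left-Caputo terms I first apply \rf{a15} with $\eta^{(j)}$ in the role of the function that vanishes at the endpoints and $\partial L/\partial({_a^C D^{\alpha_j}_t}x^{(j)})$ in the role of $y$; this replaces ${_a^C D^{\alpha_j}_t}\eta^{(j)}$ by ${_t D^{\alpha_j}_b}$ applied to the coefficient, and $j$ further ordinary integrations by parts (again with vanishing boundary terms) produce $(-1)^j\int_a^b\eta\,\frac{d^j}{dt^j}\bigl({_t D^{\alpha_j}_b}\,\partial L/\partial({_a^C D^{\alpha_j}_t}x^{(j)})\bigr)dt$. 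The right-Caputo terms go the same way via \rf{a16}, which turns ${_t^C D^{\alpha_j}_b}\eta^{(j)}$ into ${_a D^{\alpha_j}_t}$ applied to the corresponding coefficient. This is precisely the step that exchanges the left/right Caputo derivatives of $L$ for the right/left Riemann--Liouville derivatives appearing in \rf{t2}.

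Collecting all contributions gives $\phi'(0)=\int_a^b\eta(t)\,E(t)\,dt$, where $E(t)$ is the left-hand side of \rf{t2}. Since this vanishes for every admissible $\eta$, and in particular for every $\eta\in C_c^\infty(a,b)$, the fundamental lemma of the calculus of variations forces $E(t)\equiv0$ on $[a,b]$ (provided $E$ is continuous there), which is the asserted Euler--Lagrange equation.

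The main obstacle is not the bookkeeping above but the regularity checks that make it legitimate, which is where the hypotheses $x\in C^{n+2}[a,b]$ and $L\in C^2$ are spent: one must verify that ${_a^C D^{\alpha_j}_t}x^{(j)}={_aJ^{1-\alpha_j}_t}x^{(j+1)}$ and its right-handed analogue are well defined and regular enough, that the coefficients $\partial L/\partial(\cdots)$ evaluated along $x$ are regular enough for ${_t D^{\alpha_j}_b}$ and ${_a D^{\alpha_j}_t}$ to act on them and for the subsequent $j$-fold ordinary differentiation to give a continuous function, and that differentiating under the integral sign while computing the first variation is valid. All of this leans on the mapping properties of the fractional integrals recalled in Section~2 and, though routine, is the part that genuinely requires care. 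A secondary point: the prescribed data fix $x$ only up to its $(n-1)$st derivative, so a general admissible $\eta$ need not have $\eta^{(n)}$ vanishing at the endpoints; this is harmless because the fundamental lemma needs only compactly supported variations, for which \rf{a15} and \rf{a16} still apply at $j=n$ and every boundary term drops --- at the (innocuous, since the theorem claims only a necessary condition) cost of not reading off natural boundary conditions on $x^{(n)}$.
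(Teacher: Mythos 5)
Your proof follows essentially the same route as the paper's: weak variations $x=x^*+\varepsilon\eta$, vanishing of the first G\^ateaux variation, ordinary plus fractional integration by parts (Theorem~\ref{thm:ml:03}), and the fundamental lemma of the calculus of variations. You are in fact more careful than the paper, whose one-line justification never addresses that the prescribed boundary data leave $\eta^{(n)}$ free at the endpoints; your restriction to compactly supported variations legitimately covers the $j=n$ application of \rf{a15} and \rf{a16}, a point the paper silently skips.
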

\begin{proof}
In order to develop the necessary conditions for the extremum of the action \rf{t1}, we define a family of functions $x$ (weak variations)
\beq
\lb{p1}
x=x^*+\varepsilon \eta,
\eeq
where $x^*$ is the desired real function that satisfies the extremum of \rf{t1}, $\varepsilon \in \mathbb{R}$ is a constant, and the function $\eta$ defined in $[a,b]$ satisfies the boundary conditions
\beq
\lb{p2}
\eta(a)=\eta(b)=0, \;\;\; \frac{d^j\eta(a)}{dt^j}=\frac{d^j\eta(b)}{dt^j}=0 \;\;\; (j=1,...,n-1).
\eeq
The condition for the extremum is obtained when the first G\^ateaux variation is zero
\beq
\lb{p3}
\begin{split}
\delta S&=\lim_{\varepsilon \rightarrow 0} \frac{S[x^*+\varepsilon \eta]-S[x^*]}{\varepsilon}=\int_a^b \left[ \eta \frac{\partial L}{\partial x^*} +\sum_{j=1}^n\frac{d^j \eta}{dt^j} \frac{\partial {L}}{\partial\left(\frac{d^jx^*}{dt^j}\right)} \right.\\
&\left. \;\;\;\;\;\;\;\; +\sum_{j=0}^{n}\left( {_a^C D^{\alpha_j}_t} \frac{d^j \eta}{dt^j} \frac{\partial {L}}{\partial\left({_a^C D^{\alpha_j}_t} \frac{d^jx^*}{dt^j}\right)} + {_t^C D^{\alpha_j}_b} \frac{d^j \eta}{dt^j} \frac{\partial {L}}{\partial\left({_t^C D^{\alpha_j}_b} \frac{d^jx^*}{dt^j}\right)}\right)\right]dx=0. 
\end{split}
\eeq
Finally, by using the formula for integration by part \cite{OldhamSpanier,SATM}, the boundary conditions \rf{p2} and the fundamental lemma of the calculus of variations, we obtain the fractional Euler-Lagrange equations \rf{t2}. 
\end{proof}

It is important to notice that since ${_a^C D^{\alpha}_t} \frac{d^j}{dt^j}={_a^C D^{\alpha+j}_t}$, we can rewrite our Lagragian with Caputo derivatives of order $0<\alpha+j<\alpha+n$. However, different from \cite{ART} where a similar functional is considered, we adopt the notation ${_a^C D^{\alpha}_t} \frac{d^j}{dt^j}$ because it is more didatic for the present purpose. Finally, it is important to mention that our Action Principle generalizes \cite{Riewe} and differs from the general theorems proposed in \cite{Agrawal,JerkLazo,CressonInizan} (for a review in recent advances in calculus of variations with fractional derivatives see \cite{MalinowskaTorres}).


\subsection{A quadratic Lagrangian for the classical accelerated point charge}

As an example of application, in this section we obtained a quadratic Lagrangian for the classical accelerated point charge \cite{landau}, where we have a radiation recoil force proportional to $\dddot{x}$. We also show that our Lagrangian is physical in the sense it provides meaningful relations for the momentum and Hamiltonian. Let us consider the Lagrangian
\begin{equation}
\lb{c1}
L\left(x,\dot{x},{_t^C D^{\frac{1}{2}}_b} \dot{x}\right)=\frac{1}{2}m\left(\dot{x}\right)^2-U(x)+\frac{2e^2}{6c^3}\left({_t^C D^{\frac{1}{2}}_b} \dot{x}\right)^2,
\end{equation}
where the three terms in \rf{c1} represent the kinetic energy, potential energy, and the fractional radiation recoil energy, respectively. Since the physical equation of motion is obtained in the limit $a\rightarrow b$, if we consider the last term in \rf{c1} up to first order in $\Delta t=b-a$ we get:
\beq
\lb{c1b}
\frac{2e^2}{6c^3}\left({_t^C D^{\frac{1}{2}}_b} \dot{x}\right)^2\approx \frac{2e^2}{6c^3}\left(\frac{\Gamma(1)}{\Gamma(\frac{3}{2})}\right)^2\left(\ddot{x}\right)^2\Delta t \approx \frac{2}{\pi} \frac{2e^2}{3c^3}\left(\ddot{x}\right)^2\Delta t,
\eeq
that coincide, apart from the multiplicative constant $2/\pi$, with the energy lost in the time interval $\Delta t$ by the radiation recoil \cite{landau}. As in the linear friction problem, this additional constant is a consequence of the use of fractional derivatives in the Lagrangian and do not appears in the equation of motion after we apply the action principle. Furthermore, the Lagrangian \rf{c1} is physical in the sense it provide us with physically meaningful relations for the momentum and the Hamiltonian. If we define the canonical variables
\beq
\lb{c1c}
q_{1}=\dot{x}, \;\;\; q_{\frac{3}{2}}={_t^C D^{\frac{1}{2}}_b} \dot{x},
\eeq
and
\beq
\lb{c1d}
p_1=\frac{\partial L}{\partial q_1}=m\dot{x}, \;\;\; p_{\frac{3}{2}}=\frac{\partial L}{\partial q_{\frac{3}{2}}}=\frac{2e^2}{3c^3}{_t^C D^{\frac{1}{2}}_b} \dot{x},
\eeq
we obtain the Hamiltonian
\beq
\lb{c1e}
H=q_1p_1+q_{\frac{3}{2}}p_{\frac{3}{2}}-L=\frac{1}{2}m\left(\dot{x}\right)^2+U(x)+\frac{2e^2}{6c^3}\left({_t^C D^{\frac{1}{2}}_b} \dot{x}\right)^2.
\eeq
From \rf{c1d} and \rf{c1e} we can see that the Lagrangian \rf{c1} is physical in the sense it provides us a correct relation for the momentum $p_1=m\dot{x}$, and a physically meaningful Hamiltonian (it is the sum of all energies). Furthermore, the additional fractional momentum $p_{\frac{3}{2}}$ goes to zero when we take the limit $a\rightarrow b$.

Finally, by inserting \rf{c1} into our generalized Euler-Lagrange equation \rf{t2} we obtain the following relation
\beq
\lb{c2}
m \ddot{x}+\frac{2e^2}{3c^3}\frac{d}{dt}\left( {_a D^{\frac{1}{2}}_t} {_t^C D^{\frac{1}{2}}_b}\right)\dot{x}=F(x),
\eeq
where $F(x)=-\frac{d}{dx}U(x)$ is the external force. The equation of motion for the accelerated charge is obtained from \rf{c2} by taking the limit $a\rightarrow b$. By taking the limit and using the approximation \rf{b7} and the relation \rf{a18} we obtain
\begin{equation}
\lb{c4}
m \ddot{x}-\frac{2e^2}{3c^3}\dddot{x}=F(x),
\end{equation}
that is the correct equation of motion for the charged particles \cite{landau}.


\section{Conclusions}

In the present work, we redefined the action principle for dissipative systems and we solved the mathematical inconsistencies present in the Riewe's approach. In order to formulate a quadratic Lagrangian for non-conservative systems, the Lagrangian functions proposed depends on mixed integer order and fractional order derivatives. Our approach has the advantage of enable us to formulate physical Lagrangians for dissipative systems in the sense they provide us a correct relation for the momentum, and a physically meaningful Hamiltonian. Furthermore, our action principle also enable us to formulate Lagrangians for higher-order open and dissipative systems. As examples of applications for non-conservative systems, we formulate a quadratic Lagrangian for a particle under a frictional force proportional to the velocity, and for the accelerated point charge.


\section*{acknowledgments}
This work was partially supported by CNPq, CAPES and FAPERGS (Brazilian research funding agencies). 


\appendix

\section{The limit $a\rightarrow b$ for Riemann-Liouville derivatives}

In this appendix we analyze the limit $a\rightarrow b$ for Riemann-Liouville derivatives. We prove the following lemma

\begin{lemma} Let $x(t)\in C^1[a,b]$, and let $\alpha\in \mathbb{R}$ with $0<\alpha<1$. If $x(b)\neq 0$ then
\beq
\lb{ap1}
\lim_{a\rightarrow b^-} {_a D_t^{\alpha}} x(t)=\lim_{a\rightarrow b^-} {_t D_b^{\alpha}} x(t)=\pm \infty,
\eeq
and the limit
\beq
\lb{ap2}
\lim_{a\rightarrow b^-} \frac{{_a D_t^{\alpha}} x(t)}{{_t D_b^{\alpha}}x(t)}
\eeq
is indefinite.
\end{lemma}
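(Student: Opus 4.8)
The plan is to reduce the Riemann--Liouville derivatives to the Caputo ones, which are well behaved on $C^1[a,b]$, and then to isolate the singular power term that is responsible for the divergence. Since $0<\alpha<1$ we have $n=[\alpha]+1=1$, so the connecting formulas \rf{a8a}--\rf{a8b} read
\beq
\lb{apx1}
\begin{split}
{_a D^{\alpha}_t} x(t)&={_a^C D^{\alpha}_t} x(t)+\frac{x(a)}{\Gamma(1-\alpha)}(t-a)^{-\alpha},\\
{_t D^{\alpha}_b} x(t)&={_t^C D^{\alpha}_b} x(t)+\frac{x(b)}{\Gamma(1-\alpha)}(b-t)^{-\alpha}.
\end{split}
\eeq
Because $x\in C^1[a,b]$, writing $M=\sup_{[a,b]}|\dot x|$ and using $(t-u)^{-\alpha}>0$, one gets the elementary bound $\left|{_a^C D^{\alpha}_t} x(t)\right|\le \frac{M}{\Gamma(1-\alpha)}\int_a^t (t-u)^{-\alpha}du=\frac{M(t-a)^{1-\alpha}}{\Gamma(2-\alpha)}$, and similarly $\left|{_t^C D^{\alpha}_b} x(t)\right|\le \frac{M(b-t)^{1-\alpha}}{\Gamma(2-\alpha)}$. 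Since $1-\alpha>0$ and $t-a\le b-a$, both Caputo derivatives are $O\!\left((b-a)^{1-\alpha}\right)$ and hence tend to $0$ as $a\to b^-$.

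For \rf{ap1} I would observe that in each line of \rf{apx1} the power term dominates. In the first line, the Caputo part stays bounded while $x(a)\to x(b)\neq 0$ (by continuity) and $(t-a)^{-\alpha}\to+\infty$ (because $t-a\to 0^+$ when $a<t\le b$ and $a\to b^-$); as $\Gamma(1-\alpha)>0$ for $0<\alpha<1$, the term $\frac{x(a)}{\Gamma(1-\alpha)}(t-a)^{-\alpha}$ --- and therefore the whole sum --- diverges to $+\infty$ if $x(b)>0$ and to $-\infty$ if $x(b)<0$. The identical reasoning applied to the second line, with $(b-t)^{-\alpha}$ in place of $(t-a)^{-\alpha}$, gives the same conclusion. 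Hence both limits in \rf{ap1} equal $+\infty$ when $x(b)>0$ and $-\infty$ when $x(b)<0$.

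For the ratio \rf{ap2} I would factor the singular terms out of \rf{apx1}:
\beq
\lb{apx2}
\begin{split}
\frac{{_a D^{\alpha}_t} x(t)}{{_t D^{\alpha}_b} x(t)}
&=\left(\frac{b-t}{t-a}\right)^{\!\alpha}\,
\frac{\Gamma(1-\alpha)(t-a)^{\alpha}\,{_a^C D^{\alpha}_t} x(t)+x(a)}
{\Gamma(1-\alpha)(b-t)^{\alpha}\,{_t^C D^{\alpha}_b} x(t)+x(b)}.
\end{split}
\eeq
By the bound above $(t-a)^{\alpha}\,{_a^C D^{\alpha}_t} x(t)=O(t-a)\to 0$ and $(b-t)^{\alpha}\,{_t^C D^{\alpha}_b} x(t)\to 0$, while $x(a)\to x(b)\neq 0$, so the second factor in \rf{apx2} tends to $1$. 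Thus the ratio is asymptotic to $\left((b-t)/(t-a)\right)^{\alpha}$, which cannot have a limit independent of the manner in which $t$ approaches $b$: for a fixed $s\in(0,1)$ the choice $t=a+s(b-a)$ gives $\lim_{a\to b^-}{_a D^{\alpha}_t} x(t)/{_t D^{\alpha}_b} x(t)=\left((1-s)/s\right)^{\alpha}$, and distinct values of $s$ produce distinct limits, so \rf{ap2} is indefinite.

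The argument is mostly bookkeeping; the one technical input is the boundedness estimate for the Caputo derivative of a $C^1$ function, which is immediate from $\int_a^t(t-u)^{-\alpha}du=(t-a)^{1-\alpha}/(1-\alpha)$. The only point that needs to be stated carefully is the meaning of ``indefinite'': namely, that the joint passage ``$a\to b^-$ with $t\in[a,b]$'' has no well-defined value --- which is exactly what the one-parameter family $t=a+s(b-a)$ exhibits --- and one must keep in mind throughout that $t$ necessarily collapses to $b$ together with $a$, so that both $t-a$ and $b-t$ tend to $0$.
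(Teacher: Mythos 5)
Your proof is correct, and it isolates the same singular term that drives the paper's argument --- namely $\frac{x(a)}{\Gamma(1-\alpha)}(t-a)^{-\alpha}$ and its right-sided counterpart --- but it reaches that decomposition by a different and in fact tighter route. The paper writes $x(t)=x(a)+\dot{x}(c_a)(t-a)$ via the mean value theorem and then applies the power rules \rf{a81}--\rf{a82} term by term, which is only heuristic: the point $c_a$ depends on $t$, so this is not a genuine functional identity to which one may apply ${_a D_t^{\alpha}}$ linearly. You instead use the exact Riemann--Liouville/Caputo relations \rf{a8a}--\rf{a8b} (with $n=1$) together with the uniform bound $\left|{_a^C D^{\alpha}_t} x(t)\right|\le M(t-a)^{1-\alpha}/\Gamma(2-\alpha)$, $M=\sup_{[a,b]}|\dot x|$, which makes the ``remainder is $O((t-a)^{1-\alpha})$'' step rigorous rather than implicit. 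The conclusions agree: both limits in \rf{ap1} diverge with the sign of $x(b)$, and the ratio is asymptotic to $\left((b-t)/(t-a)\right)^{\alpha}$, whose value along $t=a+s(b-a)$ is $\left((1-s)/s\right)^{\alpha}$ and hence depends on $s$, exactly as in \rf{ap5}. The one presentational point worth keeping is the remark you already make at the end: ``indefinite'' must be read as ``the joint limit $a\to b^-$, $t\in(a,b)$ has no unique value,'' since $t$ is forced to collapse onto $b$ together with $a$.
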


\begin{proof}
Since $x(t)\in C^1[a,b]$, there are real numbers $a<c_a<t$ and $t<c_b<b$ such that $x(t)=x(a)+\dot{x}(c_a)(t-a)$ in $[a,t]$ and $x(t)=x(b)-\dot{x}(c_b)(b-t)$ in $[t,b]$. Then, from \rf{a81} and \rf{a82}, we have:
\beq
\lb{ap3}
\lim_{a\rightarrow b^-} {_a D_t^{\alpha}} x(t)=\lim_{a\rightarrow b^-} \left(x(a)\frac{(t-a)^{-\alpha}}{\Gamma(1-\alpha)}+\dot{x}(c_a)\frac{(t-a)^{1-\alpha}}{\Gamma(2-\alpha)}\right)=x(b)\lim_{a\rightarrow b^-}\frac{(t-a)^{-\alpha}}{\Gamma(1-\alpha)},
\eeq
\beq
\lb{ap4}
\lim_{a\rightarrow b^-} {_t D_b^{\alpha}} x(t)=\lim_{a\rightarrow b^-} \left(x(b)\frac{(b-t)^{-\alpha}}{\Gamma(1-\alpha)}-\dot{x}(c_b)\frac{(b-t)^{1-\alpha}}{\Gamma(2-\alpha)}\right)=x(b)\lim_{a\rightarrow b^-} \frac{(b-t)^{-\alpha}}{\Gamma(1-\alpha)},
\eeq
and
\beq
\lb{ap5}
\begin{split}
\lim_{a\rightarrow b^-} \frac{{_a D_t^{\alpha}} x(t)}{{_t D_b^{\alpha}}x(t)}&=\lim_{a\rightarrow b^-} \frac{x(a)\frac{(t-a)^{-\alpha}}{\Gamma(1-\alpha)}+\dot{x}(c_a)\frac{(t-a)^{1-\alpha}}{\Gamma(2-\alpha)}}{x(b)\frac{(b-t)^{-\alpha}}{\Gamma(1-\alpha)}-\dot{x}(c_b)\frac{(b-t)^{1-\alpha}}{\Gamma(2-\alpha)}}=
\lim_{a\rightarrow b^-}\frac{x(a)}{x(b)} \left(\frac{b-t}{t-a}\right)^{\alpha}\\
&=\lim_{a\rightarrow b^-}\left(\frac{b-t}{t-a}\right)^{\alpha}.
\end{split}
\eeq
It can be easily seen that the limits \rf{ap3} and \rf{ap4} are divergent since $x(b)\neq 0$. On the other hand, the limit \rf{ap5} is indefinite since $t$ is any arbitrary number in $[a,b]$ (see discussion on section $3$).
\end{proof}


\begin{thebibliography}{}

\bibitem{bauer} P. S. Bauer, Proc. Natl. Acad. Sci {\bf 17} 311 (1931).

\bibitem{Stevens}  K. W. H. Stevens, Proc. Phys. Soc. London {\bf 72} 1027 (1958); P. Havas, Nuovo Cimento Suppl. {\bf 5} 363 (1957); F. Negro and A. Tartaglia, Phys. Lett. {\bf 77A} 1 (1980); Phys. Rev. A {\bf 23} 1591 (1981); J. R. Brinati and S. S. Mizrahi, J. Math. Phys. {\bf 21} 2154 (1980); A. Tartaglia, Eur. J. Phys. {\bf 4} 231 (1983).

\bibitem{Morse}  H. Bateman, Phys. Rev. {\bf 38} 815, (1931); P. M. Morse and H. Feshbach, {\it Methods of Theoretical Physics}, McGraw-Hill, New York, (1953), pp. 298-299; H. Feshbach and Y. Tikochinsky, Trans. N. Y. Acad. Sci. {\bf 38} 44 (1977); E. Celeghini, M. Rasetti, M. Tarlini and G. Vitiello, Mod. Phys. Lett. B {\bf 3} 1213 (1989); E. Celeghini, H. Rasetti and G. Vitiello, Ann. Phys. (N.Y.) {\bf 215} 156 (1992).

\bibitem{VujaJones} B. D. Vujanovic and S. E. Jones, {\it Variational Methods in Nonconservative Phenomena}, Academic, San Diego (1989).

\bibitem{Riewe} F. Riewe, Phys. Rev. E {\bf 53} 1890 (1996).

\bibitem{OldhamSpanier} K. B. Oldham and J. Spanier, {\it The Fractional Calculus}, Academic Press, New York (1974).

\bibitem{SATM} J. Sabatier, O. P. Agrawal and J. A. Tenreiro Machado (eds), {\it Advances in Fractional Calculus: Theoretical Developments and Applications in Physics and Engineering}, Springer, Netherlands (2007).

\bibitem{Kilbas} A. A. Kilbas, H. M. Srivastava and J. J. Trujillo, {\it Theory and Applications of Fractional Differential Equations}, Elsevier, Amsterdam (2006).

\bibitem{Hilfer} R. Hilfer (ed), {\it Applications of Fractional Calculus in Physics}, World Scientific, Singapore (2000).

\bibitem{Magin} R. L. Magin, {\it Fractional Calculus in Bioengineering}, Begell House Publisher (2006).

\bibitem{SKM} S. G. Samko, A. A. Kilbas, O. I. Marichev, {\it Fractional Integrals and Derivatives - Theory and Applications}, Gordon and Breach, Linghorne, PA (1993).

\bibitem{Diethelm} K. Diethelm, {\it The Analysis of Fractional Differential Equations: An Application-Oriented Exposition Using Differential Operators of Caputo Type},  Springer-Verlag, Berlin Heidelberg (2010).

\bibitem{MalinowskaTorres} A.B. Malinowska and D. F. M. Torres, {\it Introduction to the fractional calculus of variations}, Imperial College Press, London \& World Scientific Publishing, Singapore (2012).

\bibitem{landau} L. D. Landau and E. M. Lifshitz, {\it The Classical Theory of Fields}, Butterworth-Heinemann, Oxford (1980).

\bibitem{Agrawal2}  O. P. Agrawal, J. Phys. A {\bf 40} 6287 (2007).

\bibitem{ELCaputo}  R. Almeida, A. B. Malinowska and D. F. M. Torres: Fractional Euler-Lagrange differential equations via Caputo derivatives. In: D. Baleanu, J. A. Tenreiro Machado and A. C. J. Luo (eds.): Fractional Dynamics and Control, pp. 109--118. Springer, New York (2012).

\bibitem{FaddeevSlavnov} L. D. Faddeev and A.A. Slavnov, {\it Gauge Field, Introduction to Quantum Theory}, North Horland, Amsterdam (1985).

\bibitem{rgravity}  K. Stelle, Phys. Rev. D {\bf 16} 953 (1977); E. S. Fradkin and A. A. Tseyllin, Nucl. Phys. B {\bf 201}  469 (1982); D. G. Boulware and S. Deser, Phys. Rev. D {\bf 25} 2556 (1985).

\bibitem{BirellDavies} D. Birell and P.C.W. Davies, {\it Quantum field in curved space}, Cambridge University press, Cambridge (1982).

\bibitem{dark}  G. W. Gibbons, arXiv:hep-th/0302199; S. M. Carroll, M. Hoffman, and M. Trodden, Phys. Rev. D {\bf 68}  023509 (2003); R. P. Woodard, Lect. Notes Phys. {\bf 720} 403 (2007).

\bibitem{electro} B. Podolsky and P. Schwed, Rev. Mod. Phys. {\bf 20} 40 (1948); J. A. Wheeler and R. P. Feynman, Rev. Mod. Phys. {\bf 21} 425 (1949); E. H. Kerner, J. Math. Phys. {\bf 3} 35 (1962).

\bibitem{other}  P. W. Hebda, J. Math. Phys. {\bf 31} 2116 (1990); R. A. Moore and T. C. Scott, Phys. Rev. A {\bf 44} 1477 (1991); D. M. Nguyen, VNU Journal of Science, Mathematics-Physics {\bf 23} 232 (2007);  E. H. Kerner, J. Math. Phys. {\bf 33} 1675 (1992); H. P. W. Gottlieb, J. Sound Vib. {\bf 271} 671 (2004).

\bibitem{Higgs} T. Appelquist and C. Bernard, Phys. Rev. D {\bf 22} 200 (1980); A. C. Longhitano, Nucl. Phys. B {\bf 188} 118 (1981); C. N. Leung, S. T. Love and S. Rao, Z. Phys. C31 433 (1986); B. Grinstein and M. B. Wise, Phys. Lett. B {\it 265} 326 (1991); C. Grosse-Knetter, Phys. Rev. D {\bf 49} 6709 (1994); C. Marat Reyes, Phys. Rev. D {\it 80} 105008 (2009).

\bibitem{ART} R. Almeida, S. Pooseh and D. F. M. Torres, Nonlinear Anal. {\bf 75} 1009 (2012).

\bibitem{Agrawal} O. P. Agrawal, J. Math. Anal. Appl. {\bf 272} 368 (2002).

\bibitem{JerkLazo} M. J. Lazo, Conference Papers in Mathematics, {\bf 2013} 872869 (2013).

\bibitem{CressonInizan} J. Cresson and P. Inizan, J. Math. Anal. Appl., {\bf 385} 975 (2012).




\end{thebibliography}
\end{document}